\documentclass[a4paper,UKenglish]{oasics-v2018}

\usepackage{microtype}
\usepackage{algorithmicx}
\usepackage{algpseudocode}
\usepackage{float}
\floatstyle{boxed} 
\restylefloat{figure}

\bibliographystyle{plainurl}

\title{A Simple Near-Linear Pseudopolynomial Time Randomized Algorithm for Subset~Sum}


\author{Ce Jin}{Institute for Interdisciplinary Information Sciences, Tsinghua University, Beijing, China}{jinc16@mails.tsinghua.edu.cn}{}{}

\author{Hongxun Wu}{Institute for Interdisciplinary Information Sciences, Tsinghua University, Beijing, China}{wuhx18@mails.tsinghua.edu.cn}{}{}

\authorrunning{C.\,Jin and H.\,Wu}

\Copyright{Ce Jin and Hongxun Wu}

\subjclass{
\ccsdesc[100]{Theory of computation~Algorithm design techniques}
}
\keywords{subset sum, formal power series, \textsf{FFT}}

\category{}

\relatedversion{}

\supplement{}

\funding{}

\acknowledgements{The authors would like to thank the anonymous reviewers for their helpful comments.}

\EventEditors{Jeremy Fineman and Michael Mitzenmacher}
\EventNoEds{2}
\EventLongTitle{2nd Symposium on Simplicity in Algorithms (SOSA 2019)}
\EventShortTitle{SOSA 2019}
\EventAcronym{SOSA}
\EventYear{2019}
\EventDate{January 9, 2019}
\EventLocation{Westin San Diego, San Diego, CA, USA}
\EventLogo{}
\SeriesVolume{69}
\ArticleNo{17} 
\nolinenumbers 
\hideOASIcs  

\begin{document}

\maketitle

\begin{abstract}
Given a multiset $S$ of $n$ positive integers and a target integer $t$, the \textsc{Subset~Sum} problem asks to determine whether there exists a subset of $S$ that sums up to $t$. The current best deterministic algorithm, by Koiliaris and Xu [SODA'17], runs in $\tilde O(\sqrt{n}t)$ time, where $\tilde O$ hides poly-logarithm factors. Bringmann [SODA'17] later gave a randomized $\tilde O(n + t)$ time algorithm using two-stage color-coding.  The $\tilde O(n+t)$ running time is believed to be near-optimal.

In this paper, we present a simple and elegant randomized algorithm for \textsc{Subset~Sum} in $\tilde O(n + t)$ time. Our new algorithm actually solves its counting version modulo prime $p>t$, by manipulating generating functions using \textsf{FFT}. 
 \end{abstract}

\newcommand{\fp}{\mathbb{F}_p}
\newcommand{\Q}{\mathbb{Q}}
\newcommand{\R}{\mathbb{R}}
\newcommand{\Z}{\mathbb{Z}}

\section{Introduction}
Given a multiset $S$ of $n$ positive integers and a target integer $t$, the \textsc{Subset~Sum} problem asks to determine whether there exists a subset of $S$ that sums up to $t$.
  It is one of Karp's original NP-complete problems \cite{karp1972reducibility}, and is widely taught in undergraduate algorithm classes. In 1957, Bellman gave the well-known dynamic programming algorithm \cite{bellman1957dynamic} in time $O(nt)$. Pisinger \cite{pisinger1999linear} first improved it to $O(nt / \log{t})$ on word-RAM models.
Recently, Koiliaris and Xu gave a deterministic algorithm \cite{koiliaris2017faster, koiliaris2018subset} in time $\tilde O(\sqrt{n}t)$, which is the best deterministic algorithm so far.
Bringmann \cite{bringmann2017near} later improved the running time to randomized  $\tilde O(n + t)$ using color-coding and layer splitting techniques.
Abboud et al. \cite{abboud2019seth} recently showed that \textsc{Subset~Sum} has no $O(t^{1 - \epsilon}n^{O(1)})$ algorithm for any $\epsilon > 0$, unless the Strong Exponential Time Hypothesis (SETH) is false, so the $\tilde O(n+t)$ time bound is likely to be near-optimal.

In this paper, we present a new randomized algorithm matching the $\tilde O(n + t)$ running time by Bringmann \cite{bringmann2017near}.
The basic idea of our approach is quite straightforward. For prime $p>t$, we give an $\tilde O(n + t)$ algorithm for  $\#_p\textsc{Subset~Sum}$, the counting version of \textsc{Subset~Sum} problem modulo $p$. Then the decision version can be solved with high probability by randomly picking a sufficiently large prime $p$.

A closely related problem is \#\textsc{Knapsack}, which asks for the number of subsets $S$ such that $\sum_{s \in S} s \leq t$. There are extensive studies on approximation algorithms  for the \textsc{\#Knapsack} problem \cite{dyer2003approximate,gopalan2011fptas,rizzi2014faster,gawrychowski2018faster}. 
Our algorithm can solve the modulo $p$ version \textsc{\#$_p$Knapsack} in near-linear pseudopolynomial time for prime $p>t$.  

Compared to the previous near-linear time algorithm for \textsc{Subset~Sum} by Bringmann \cite{bringmann2017near}, our algorithm is simpler and more practical.
The precise running time of our algorithm is $O(n+t\log^2 t)$ with error probability $O((n+t)^{-1})$. 
If a faster algorithm for manipulating formal power series by Brent \cite{brent1976multiple} is applied, it can be improved to $O(n + t\log{t})$ time (see Remark on Lemma \ref{exp}), which is faster than Bringmann's algorithm by a factor of $\log^4 n$. 
\subsection{Main ideas of our algorithm}
The \textsc{Subset~Sum} instance can be encoded as a generating function $A(x)=\prod_{i=1}^n (1+x^{s_i})$, where $s_1,\dots,s_n$ are the input integers, and our goal is to compute the $t$-th coefficient of $A(x)$ and see whether it is zero or not. 

Instead of directly expanding $A(x)$, we consider its logarithm $B(x) = \ln(A(x))$. 
Using basic properties of the logarithm function and its power series, it's possible to compute the first $t+1$  coefficients of $B(x)$ in $\tilde O(t)$ time.
Then we can recover the first $t+1$ coefficients of $A(x)=\exp(B(x))$ in $\tilde O(t)$ time using a simple divide and conquer algorithm with \textsf{FFT} (or a slightly faster algorithm by Brent \cite{brent1976multiple}).

The coefficients involved in the algorithm could be exponentially large. To avoid dealing with high-precision numbers, we pick a prime $p$ and perform arithmetic operations efficiently in the finite field $\fp$, and in the end check whether the result is zero modulo $p$. By picking random $p$ from a large interval, the algorithm succeeds with high probability.

\section{Preliminaries}

\subsection{Subset sum problem}

Given $n$ (not necessarily distinct) positive integers $s_1,s_2,\dots,s_n$  and a target sum $t$, the \textsc{Subset~Sum} problem is to decide whether there exists a subset of indices $I \subseteq \{1,2,\dots,n\}$ such that $\sum_{i \in I}s_i = t$. We also consider the \#$_p$\textsc{Subset~Sum} problem, which asks for the number of such subsets $I$ modulo $p$. We use the word RAM model with word length $w = \Theta (\log t)$ throughout this paper.

\subsection{Polynomials and formal power series}

\paragraph*{Formal power series} 

Let $R[x]$ denote the ring of polynomials over a ring $R$, and $R[[x]]$ denote the ring of formal power series over $R$. A formal power series $f(x) =\sum_{i=0}^\infty f_i x^i$ is a generalization of a polynomial with possibly an infinite number of terms. Polynomial addition and multiplication naturally generalize to $R[[x]]$. Composition  $(f\circ g)(x) = f(g(x)) = \sum_{i=0}^\infty f_i \Big (\sum_{j=1}^\infty g_j x^j\Big )^i$ is  well-defined for $f(x) = \sum_{i=0}^\infty f_i x^i \in R[[x]]$ and $g(x) = \sum_{j=1}^\infty g_j x^j\in xR[[x]]$. Here $xR[[x]]$ (or $xR[x]$) denotes the set of series in $R[[x]]$ (or polynomials in $R[x]$) with zero constant term.
   
\paragraph*{Exponential and logarithm} 

We are familiar with the following two series in $\Q[[x]]$, 
\begin{align}
\ln(1+x) &= \sum_{k=1}^\infty \frac{(-1)^{k-1}x^k}{k},\\
    \exp(x) &= \sum_{k = 0}^\infty \frac{x^k}{k!},
\end{align}
satisfying
\begin{equation}
    \exp\big (\ln(1+f(x))\big) = 1+f(x),
 \end{equation}
 and
 \begin{equation}
  \ln \big( (1+f(x))(1+g(x)) \big ) = \ln(1+f(x)) + \ln(1+g(x))
\end{equation}
for any $f(x), g(x) \in x\Q[x]$. 

\paragraph*{Modulo $x^{t+1}$} 

Our algorithm only deals with the first $t + 1$ terms of any formal power series. 
For $f(x),g(x)\in R[[x]]$, we write $f(x)\equiv g(x) \pmod{x^{t+1}}$ if $[x^i]f(x)=[x^i]g(x)$ for all $0\le i \le t$, where $[x^i]f(x)$ denotes the $i$-th coefficient of $f(x)$.

As an example, define  \begin{equation}
\label{expt}
    \exp_t(x) =\sum_{i=0}^t \frac{x^i}{i!}
\end{equation}
as a $t$-th degree polynomial in $\Q[x]$. Then
$\exp(f(x)) \equiv \exp_t(f(x)) \pmod{x^{t+1}}$
clearly holds for any $f(x) \in x\Q[[x]]$.

\subsection{Modulo prime \texorpdfstring{$p$}{p}}
\label{modprime}
\newcommand{\overbar}[1]{\overline{\mkern-1.0mu#1\mkern-1.5mu}}
To avoid dealing with large fractions or floating-point numbers, we will work in the finite field $\fp = \{\overbar 0, \overbar 1,\dots, \overbar{p-1}\}$ of prime order $p=2^{\Theta(\log t)}$.  
Addition and multiplication in $\fp$ take $O(1)$ time in the word RAM model. Finding the multiplicative inverse of a nonzero element in $\fp$ takes $O(\log p)$ time using extended Euclidean algorithm \cite[Section~31.2]{cormen2009introduction}.

Our algorithm will regard polynomial coefficients as elements from $\fp$. The coefficients can be rational numbers, but their denominators should not have prime factor $p$.
Formally, let 
\begin{equation}\Z_{p\Z} = \{r/s \in \Q: \text{$r,s$ are coprime integers, $p$ does not divide $s$}\}
\end{equation}
and apply the canonical homomorphism from $\Z_{p\Z}[x]$ to $\fp[x]$, determined by
\begin{align}
    r/s \mapsto \bar{s}^{-1}\bar{r},\;\; x \mapsto x.
\end{align}
We use $\bar{A}$ or $A \bmod p$ to denote $A$'s image in $\fp[x]$.

From now on we assume $p>t$, so that $\exp_t(x) \in \Z_{p\Z}[x]$ (see equation (\ref{expt})), and  let $\overline{\exp_t}(x)$ denote its image in $\fp[x]$. 

\subsection{Computing exponential using FFT}
\begin{lemma}[FFT]
\label{fft}
Given two polynomials $f(x),g(x) \in \fp[x]$ of degree at most $t$, one can compute their product $f(x)g(x)$ in $O(t \log t)$ time.
\end{lemma}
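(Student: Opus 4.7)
The plan is to reduce polynomial multiplication in $\fp[x]$ to exact multiplication in $\Z[x]$, which is then performed via the Number Theoretic Transform (NTT) modulo a constant number of FFT-friendly auxiliary primes, combined with the Chinese Remainder Theorem. First, I would lift $f$ and $g$ to $\tilde f,\tilde g \in \Z[x]$ with coefficients in $\{0,1,\ldots,p-1\}$, so that reducing $\tilde f\tilde g$ coefficient-wise modulo $p$ recovers $fg$. The crucial observation is that each coefficient of $\tilde f\tilde g$ is a nonnegative integer bounded by $(t+1)p^2$, which is polynomial in $t$ since $p = 2^{\Theta(\log t)}$, and therefore fits in $O(1)$ machine words of width $w = \Theta(\log t)$.

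Next, I would pick $k = O(1)$ auxiliary primes $q_1,\ldots,q_k$ of the form $q_i = c_i\cdot 2^{m_i} + 1$ with $2^{m_i}\ge 2(t+1)$, each of $\Theta(\log t)$ bits, whose product exceeds $(t+1)p^2$. Such primes exist by Dirichlet's theorem on primes in arithmetic progressions, and each $\mathbb{F}_{q_i}$ then contains a primitive $2^{m_i}$-th root of unity $\omega_i$, which can be precomputed. For each $i$, I would compute $\tilde f\tilde g \bmod q_i$ via the Cooley--Tukey radix-2 NTT with respect to $\omega_i$; since $q_i$ fits in a single word, arithmetic in $\mathbb{F}_{q_i}$ is $O(1)$, and the NTT-based multiplication costs $O(t\log t)$ per prime. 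Finally, Chinese remaindering reconstructs each integer coefficient of $\tilde f\tilde g$ from its $k$ residues in $O(1)$ per coefficient, and a coefficient-wise reduction modulo $p$ yields $fg$.

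The main obstacle in writing a fully self-contained proof would be verifying that the Cooley--Tukey butterfly performs only $O(t\log t)$ operations in $\mathbb{F}_{q_i}$, together with the standard DFT correctness argument via Vandermonde invertibility. Both are entirely classical, so I would invoke them as a black box with a citation to a standard reference (such as \emph{Modern Computer Algebra} by von zur Gathen and Gerhard) rather than reprove them. An even lighter alternative would be to apply Kronecker substitution and appeal to any near-linear integer multiplication algorithm, but this introduces a mild $\log\log$-type overhead unless one uses the recent Harvey--van der Hoeven bound, so the multi-prime NTT route is the cleanest way to hit the claimed $O(t\log t)$ bound.
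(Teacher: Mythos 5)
Your proposal is correct, and it shares the paper's high-level reduction: lift $f,g$ to integer polynomials, multiply exactly in $\Z[x]$, and reduce each coefficient modulo $p$ at the end. Where you differ is in how the integer multiplication is justified. The paper simply invokes ``the classic \textsf{FFT} algorithm'' from CLRS in one sentence, which implicitly relies either on complex/fixed-point \textsf{FFT} with an $O(\log t)$-bit precision analysis or on treating integer polynomial multiplication as a black box; you instead carry out the multiplication exactly via $O(1)$ word-sized NTT-friendly primes $q_i = c_i 2^{m_i}+1$ and Chinese remaindering, using the bound $(t+1)p^2 = t^{O(1)}$ on the product's coefficients to keep everything in $O(1)$ machine words. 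This is a more careful and fully exact-arithmetic argument that matches the paper's word-RAM model (word length $w=\Theta(\log t)$, $p = 2^{\Theta(\log t)}$) more honestly than the paper's own one-line citation; the only points you should not leave entirely implicit are the existence and constructibility of the auxiliary primes (Linnik-type density or a precomputation argument suffices, since they depend only on $t$) and the standard $O(t\log t)$ operation count of the radix-2 butterfly, both of which you reasonably delegate to a reference. In short: same decomposition, but your instantiation of the integer-multiplication step is more rigorous than the paper's, at the cost of a few extra sentences.
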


\begin{proof}
The classic \textsf{FFT} algorithm \cite[Chapter~30] {cormen2009introduction} can multiply $f(x)$ and $g(x)$, regarded as polynomials in $\Z[x]$, in $O(t\log t)$ time. Then take the remainder of each coefficient modulo $p$.
\end{proof}
Lemma \ref{exp} is a classical result on manipulating formal power series, and is the main building block of our algorithm.

\begin{lemma}[Brent  \cite{brent1976multiple}]
\label{exp}
Given a polynomial $f(x) \in x\fp[x]$ of degree at most $t$ $(t<p)$, one can compute  a polynomial $g(x) \in \fp[x]$ in $\tilde O(t)$ time such that $g(x) \equiv \overline{\exp_t}(f(x)) \pmod{x^{t+1}}$.

\end{lemma}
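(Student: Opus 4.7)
The plan is to compute $g \equiv \overline{\exp_t}(f) \pmod{x^{t+1}}$ by applying Newton's method to the equation $\ln g - f = 0$. Starting from $g_0 = 1$ (so $\ln g_0 = 0$), I maintain the invariant $\ln g_k \equiv f \pmod{x^{2^k}}$ and double the precision at each step via the update
\begin{equation*}
g_{k+1} \equiv g_k \cdot \bigl(1 + f - \ln g_k\bigr) \pmod{x^{2^{k+1}}}.
\end{equation*}
After $\lceil \log_2(t+1) \rceil$ iterations this reaches precision $t+1$, yielding the desired $g$.

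To evaluate $\ln g_k \pmod{x^{2^{k+1}}}$ at each step, I would use the identity $\ln g = \int g'/g\, dx$: differentiate $g_k$ termwise, compute the multiplicative inverse $g_k^{-1} \pmod{x^{2^{k+1}}}$ by a second (inner) Newton iteration $h_{j+1} \equiv h_j(2 - g_k h_j) \pmod{x^{2^{j+1}}}$, multiply the two via Lemma \ref{fft}, and then antidifferentiate termwise. The antiderivative divides the coefficient of $x^{i-1}$ by $i$, which is well-defined in $\fp$ because $p > t \geq i$; I would precompute the inverses $\overline{1}^{-1},\dots,\overline{t}^{-1}$ once in $O(t\log p)$ time via the extended Euclidean algorithm and reuse them throughout.

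Each outer iteration performs a constant number of polynomial multiplications of degree $O(2^k)$, and likewise each inner iteration. By Lemma \ref{fft} both cost $O(2^k \log 2^k)$, and the geometric sum $\sum_{k=0}^{\lceil \log t \rceil} O(2^k \log 2^k) = O(t\log t) = \tilde O(t)$ gives the claimed running time.

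The main subtlety is verifying quadratic convergence of the outer iteration, i.e., that a single update genuinely doubles the precision. Writing $g_k = g(1+\varepsilon)$ with $g$ the target exponential and $\varepsilon \equiv 0 \pmod{x^{2^k}}$, a short expansion yields
\begin{equation*}
(1+\varepsilon)\bigl(1 - \ln(1+\varepsilon)\bigr) = 1 - \frac{1}{2}\varepsilon^2 + O(\varepsilon^3),
\end{equation*}
so $g_{k+1} = g\bigl(1 + O(x^{2^{k+1}})\bigr) \equiv g \pmod{x^{2^{k+1}}}$, confirming the invariant. Everything else --- termwise differentiation and integration, convergence of the inner inverse subroutine (by an analogous but simpler Newton argument), and the FFT-based multiplications --- is routine once this precision-doubling step is established.
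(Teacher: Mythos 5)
Your proof is correct, but it is not the route the paper takes: you have reconstructed the Newton-iteration algorithm of Brent and Kung (the very reference cited in the lemma statement), whereas the paper deliberately sidesteps it. The paper instead differentiates the identity $g = \exp(f)$ to get $g' = gf'$, reads off the coefficient recurrence $g_i = i^{-1}\sum_{j=0}^{i-1}(i-j)f_{i-j}g_j$, and evaluates this ``online convolution'' by a divide-and-conquer in which each \textsc{Compute}$(l,r)$ call uses one \textsf{FFT} (Lemma~\ref{fft}) to push the contribution of $g_l,\dots,g_m$ onto $g_{m+1},\dots,g_r$; this gives $T(t)=2T(t/2)+O(t\log t)=O(t\log^2 t)$. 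Your approach is a $\log$ factor faster, $O(t\log t)$, and is exactly what the paper's remark alludes to when it says the overall running time can be improved to $O(n+t\log t)$; the price is the nested machinery (an inner Newton iteration for $g_k^{-1}$, termwise integration, and the quadratic-convergence argument you correctly verify via $(1+\varepsilon)(1-\ln(1+\varepsilon)) = 1-\tfrac12\varepsilon^2+O(\varepsilon^3)$). The paper's version needs only a single recurrence and one divide-and-conquer, which is why the authors call it ``simpler'' and accept the extra $\log$. Both arguments rely on the same two facts you use --- $p>t$ so the inverses $\overline{1}^{-1},\dots,\overline{t}^{-1}$ exist in $\fp$, and Lemma~\ref{fft} for the multiplications --- and both establish the stated $\tilde O(t)$ bound, so your proof stands on its own.
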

\begin{remark}
Brent's algorithm  \cite{brent1976multiple} uses Newton's iterative method and runs in time $O(t\log t)$. Here we describe a simpler $O(t\log^2 t)$ algorithm by standard divide and conquer. We present the algorithm as over $\Q$ for notational simplicity.
\end{remark}
\begin{proof}
Let $f(x) = \sum_{i=1}^t f_i x^i$ and $g(x) = \exp(f(x)) = \sum_{i=0}^\infty g_i x^i$. Then $g'(x) = g(x) f'(x)$. Comparing the $(i-1)$-th coefficients on both sides gives a recurrence relation 
\begin{equation}
\label{recurrence}
    g_i = i^{-1}\sum_{j=0}^{i-1}(i-j)f_{i-j}g_j 
\end{equation}
with initial value $g_0 = 1$.  The desired coefficients $g_1,\dots,g_t$ can be computed using the algorithm in Figure \ref{computing}, which simply reorganizes the computation of recurrence formula (\ref{recurrence}) as a recursion.

To speed up this algorithm, define polynomial $F(x) = \sum_{k=0}^{r-l} kf_k x^k, G(x) = \sum_{j=0}^{m-l} g_{j+l}x^j$ and use \textsf{FFT} to compute $H(x) = F(x)G(x)$ in $O((r-l)\log (r-l))$ time after \textsc{Compute}$(l,m)$ returns. Then $\sum_{j=l}^m (i-j)f_{i-j} g_j  = [x^{i-l}]H(x)$, and hence the \textbf{for} loop runs in $O(r-m)$ time. The total running time is $T(t) = 2T(t/2) + O(t\log t) = O(t\log ^2 t)$.

\begin{figure}
\centering
\begin{algorithmic}
\Procedure{Compute}{$l,r$}
\Comment{after \textsc{Compute}$(l,r)$ returns, all values $g_1,\dots,g_r$ are ready}
\If{$l<r$}
    \State{$m \gets \lfloor (l+r)/2 \rfloor$}
    \State{\textsc{Compute}($l,m$)}
    \For{$i \gets m+1,m+2,\dots,r$} 
            \State{$g_i \gets g_i + i^{-1}\sum_{j=l}^m  (i-j)f_{i-j}g_j$}
    \EndFor
    \State{\textsc{Compute}($m+1,r$)}
\EndIf
\EndProcedure
\\
\Procedure{Main}{}
\State{Initialize $g_0 \gets 1, g_i \gets 0 (1\le i \le t)$}
\State{\textsc{Compute}$(0,t)$}
\EndProcedure
\end{algorithmic}
\caption{Algorithm for computing $g_1,\dots,g_t$}
\label{computing}
\end{figure}
\end{proof}

\section{Main algorithm} \label{algo}

Recall that we are given $n$ positive integers $s_1,\dots,s_n$ and a target sum $t$.
Consider the generating function $A(x)$ defined by
\begin{equation} A(x) = \prod_{i=1}^n (1+x^{s_i}). \end{equation}
The number of subsets that sum up to $t$ is $[x^t]A(x)$. The \textsc{Subset~Sum} instance has a solution if and only if $[x^t]A(x)\neq 0$.

\begin{lemma} \label{lemprob} 
Suppose $[x^t]A(x) \neq 0$. Let $p$ be a uniform random prime from $[t + 1, (n + t)^{3}]$. With probability $1-O((n+t)^{-1})$, $p$ does not divide $[x^t]A(x)$.  
\end{lemma}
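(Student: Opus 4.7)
The plan is a standard prime-avoidance argument: bound the number of ``bad'' primes (those dividing $N := [x^t]A(x)$) by bounding $N$ itself, then divide by the number of primes in the sampling interval, estimated via a Chebyshev-type lower bound on $\pi$.

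First I would observe that $A(x) = \prod_{i=1}^n(1+x^{s_i})$ has non-negative integer coefficients, and that $[x^t]A(x)$ counts a subfamily of all $2^n$ subsets of $\{1,\dots,n\}$. Hence
\begin{equation}
1 \le N = [x^t]A(x) \le 2^n,
\end{equation}
where the lower bound is the hypothesis. Since every positive integer $M$ has at most $\log_2 M$ distinct prime divisors, $N$ has at most $n$ distinct prime divisors.

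Next I would lower bound the number of primes in the sampling interval. By the prime number theorem (or Chebyshev's elementary estimate, which suffices), the number of primes in $[t+1,(n+t)^3]$ is at least
\begin{equation}
\pi\bigl((n+t)^3\bigr) - \pi(t) = \Omega\!\left(\frac{(n+t)^3}{\log(n+t)}\right),
\end{equation}
using $(n+t)^3 \gg t$ so that the subtracted term is negligible. Therefore, with $p$ drawn uniformly from the primes in $[t+1,(n+t)^3]$,
\begin{equation}
\Pr\bigl[p \mid N\bigr] \;\le\; \frac{n}{\Omega((n+t)^3/\log(n+t))} \;=\; O\!\left(\frac{n\log(n+t)}{(n+t)^3}\right) \;=\; O\!\left(\frac{1}{n+t}\right),
\end{equation}
where the last step uses $n \le n+t$ and $\log(n+t) \le (n+t)$.

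I do not expect any real obstacle here; the only point that requires a line of justification is the lower bound on $\pi((n+t)^3)-\pi(t)$, which is immediate from Chebyshev or PNT and does not need the full strength of any deep result. One minor sanity check I would include is that $[x^t]A(x)$ is indeed a positive integer (not a field element), so that ``$p$ divides $[x^t]A(x)$'' is a well-posed statement in $\Z$, matching the use of this lemma in the main algorithm: whenever $p \nmid [x^t]A(x)$, the coefficient survives reduction modulo $p$ and the algorithm correctly reports a solution.
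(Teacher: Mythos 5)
Your proposal is correct and follows essentially the same argument as the paper: bound $[x^t]A(x)\le 2^n$ to get at most $n$ bad primes, lower-bound the number of primes in $[t+1,(n+t)^3]$ by a Chebyshev-type estimate, and divide. The paper simply states the weaker (but sufficient) count of $\Omega((n+t)^2)$ primes in the interval rather than your sharper $\Omega((n+t)^3/\log(n+t))$.
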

\begin{proof}
Notice that $[x^t] A(x) \leq 2^n$, so it has at most $n$ prime factors. Since there are $\Omega((n+t)^2)$ primes in the interval, the probability that $p$ divides $[x^t]A(x)$ is $O((n+t)^{-1})$.
\end{proof}

\begin{lemma}
\label{lemln}
Let $B(x) = \ln(A(x)) \in \Q[[x]]$. For prime $p > t$, in $\tilde O(t)$ time one can compute $([x^r]B(x)) \bmod p$ for all $0\le r \le t$.
\end{lemma}
\begin{proof}
By definition of $B(x)$, 
\begin{align}
    B(x) = \ln \Big ( \prod_{i=1}^n (1+ x^{s_i})\Big )  = \sum_{i=1}^n \ln (1+x^{s_i})  = \sum_{i=1}^n \sum_{j=1}^\infty \frac{(-1)^{j-1}}{j}x^{s_ij}. 
\end{align}
Let $a_k$ be the size of the set $\{j:s_j=k\}$, and define polynomial
\begin{align}
\label{bx}
     B_t(x)= \sum_{i=1}^n \sum_{j=1}^{\lfloor t/s_i \rfloor}\frac{(-1)^{j-1}}{j}x^{s_ij}
     = \sum_{k=1}^t \sum_{j=1}^{\lfloor t/k \rfloor}\frac{a_k (-1)^{j-1}}{j}x^{jk}.
\end{align}
Then $[x^r]B_t(x) = [x^r]B(x)$ for all $0\le r \le t$.

Note that the denominators $j$ in (\ref{bx}) do not have prime factor $p$.
 After preparing the multiplicative inverses $\bar{j}^{-1}$ for each $1\le j \le t$, we can compute all $([x^r]B_t(x)) \bmod p$ by simply iterating over $k,j$ in equation (\ref{bx}), which only takes $\sum_{k=1}^t \lfloor t/k \rfloor  = O(t\log t)$ time.
\end{proof}
\begin{lemma}
\label{lemalgo}
For prime $p > t$, one can compute $([x^r]A(x)) \bmod{p}$ for all $0\le r \le t$ in $\tilde O(t)$ time.
\end{lemma}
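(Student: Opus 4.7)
The plan is to turn the $n$-fold product defining $A(x)$ into a single exponential of a sum via the identity $A(x) = \exp(\ln A(x))$, so that the bulk of the work reduces to one invocation of Lemma~\ref{exp}. Working first in $\Q[[x]]$, I would write
\[
    B(x) = \ln A(x) = \sum_{i=1}^n \ln(1+x^{s_i}) = \sum_{i=1}^n \sum_{j=1}^\infty \frac{(-1)^{j-1}}{j}\, x^{s_i j},
\]
truncate each inner sum at $j \le \lfloor t/s_i \rfloor$ to obtain a polynomial $B_t(x) \in \Z_{p\Z}[x]$ with $B_t(x) \equiv B(x) \pmod{x^{t+1}}$, and deduce $A(x) \equiv \exp_t(B_t(x)) \pmod{x^{t+1}}$. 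Since $p > t$ every denominator $j \le t$ is invertible in $\fp$, so I can descend to $\fp$ via the canonical homomorphism and reduce the problem to computing $\overline{\exp_t}(\overline{B_t}(x)) \bmod x^{t+1}$.

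The step I expect to be the main obstacle is writing down $\overline{B_t}(x)$ within the $\tilde O(t)$ budget, since a naive count of nonzero terms gives $\sum_{i=1}^n \lfloor t/s_i \rfloor$, which can be as large as $\Omega(nt)$ when many $s_i$ coincide at small values. I would sidestep this by grouping the input by distinct value, writing $A(x) = \prod_v (1+x^v)^{c_v}$ with $c_v$ counting the occurrences of $v$, so that
\[
    B_t(x) = \sum_{v \le t} c_v \sum_{j=1}^{\lfloor t/v \rfloor} \frac{(-1)^{j-1}}{j}\, x^{vj}.
\]
Items with $s_i > t$ contribute a factor of $1$ modulo $x^{t+1}$ and are discarded. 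The distinct values $v \le t$ are at most $t$ in number, so a harmonic bound gives $\sum_{v \le t} \lfloor t/v \rfloor \le t \sum_{v=1}^{t} 1/v = O(t \log t)$ nonzero monomials in $\overline{B_t}(x)$, each of which can be produced in $O(1)$ time after $\tilde O(t)$ preprocessing for the multiplicative inverses $\bar{1}^{-1}, \ldots, \bar{t}^{-1}$ in $\fp$.

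Assembling the pieces, the grouping phase costs $O(n)$, forming $\overline{B_t}(x)$ costs $\tilde O(t)$, and a single call to Lemma~\ref{exp} outputs a polynomial $g(x) \equiv \overline{\exp_t}(\overline{B_t}(x)) \pmod{x^{t+1}}$ in $\tilde O(t)$ time. The identity $\exp(\ln(1+f)) = 1+f$ in $\Q[[x]]$, pushed through the canonical homomorphism, gives $g(x) \equiv A(x) \pmod{x^{t+1}}$, so $[x^t]g(x)$ is precisely $([x^t]A(x)) \bmod p$. The overall running time is $\tilde O(t)$ once the $O(n)$ input preprocessing is charged separately, as intended in the statement of the claim.
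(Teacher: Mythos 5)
Your proposal is correct and follows essentially the same route as the paper: truncate $\ln A(x)$ at degree $t$, push it into $\fp[x]$, and invoke Lemma~\ref{exp} once. The grouping-by-distinct-value step you flag as the ``main obstacle'' is exactly the paper's device (its $a_k = |\{j : s_j = k\}|$ is your $c_v$), yielding the same $\sum_{k\le t}\lfloor t/k\rfloor = O(t\log t)$ bound on nonzero terms.
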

\begin{proof}
 Let $B(x) = \ln (A(x))$. Then $A(x) = \exp(B(x)) \equiv \exp_t(B_t(x)) \pmod{x^{t+1}}$, where $B_t(x) = \sum_{i=0}^t ([x^i]B(x))x^i$.
 We use Lemma \ref{lemln} to compute $B_t(x)$'s image $\overbar{B_t}(x) \in \fp[x]$, and then use Lemma \ref{exp} to compute the first $t+1$ terms of $\overbar{\exp_t} ( \overbar{B_t}(x)) $, which give the values of  $([x^r]A(x)) \bmod p$ for all $0\le r \le t$.
\end{proof}

\begin{theorem}
The \textsc{Subset~Sum} problem can be solved in time $\tilde O(n+t)$ by a randomized algorithm with one-sided error probability $O((n+t)^{-1})$.
\end{theorem}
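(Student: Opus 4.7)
The plan is to assemble the pieces already established. First, I would preprocess the input: read $s_1,\dots,s_n$, discard any $s_i > t$ (such elements cannot belong to any subset summing to $t$, so they do not affect $[x^t]A(x)$), and compute the multiplicity table $a_k = |\{i : s_i = k\}|$ for $1 \le k \le t$ by bucket-counting. This takes $O(n)$ time and prepares exactly the data consumed by Lemma \ref{lemalgo}.

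Next, I would sample a uniformly random prime $p$ from $[t+1, (n+t)^3]$ in $(\log(n+t))^{O(1)}$ time, which by standard prime-generation arguments succeeds with probability $1 - O((n+t)^{-1})$. I would then invoke Lemma \ref{lemalgo} to compute $c = [x^t]A(x) \bmod p$ in $\tilde O(t)$ time, and output ``yes'' iff $c \neq 0$.

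For correctness, observe that the algorithm has one-sided error. If no subset sums to $t$, then $[x^t]A(x) = 0$ over $\Z$, so $c = 0$ and the algorithm correctly rejects with probability $1$. If some subset does sum to $t$, then $[x^t]A(x) \ge 1$, and Lemma \ref{lemprob} bounds the probability that $p$ divides this integer by $O((n+t)^{-1})$. Union-bounding with the prime-sampling failure probability yields total error $O((n+t)^{-1})$. Summing the costs gives running time $O(n) + (\log(n+t))^{O(1)} + \tilde O(t) = \tilde O(n+t)$.

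There is no genuine obstacle here, since all the heavy lifting has been done: the \textsf{FFT}-based exponential computation of Lemma \ref{exp}, the generating-function reformulation of $A(x)$, the modular computation of Lemma \ref{lemalgo}, and the number-theoretic bound of Lemma \ref{lemprob}. The only detail worth stating explicitly is the pruning step $s_i > t$ and the bucket-counting of $a_k$, which reduces the arbitrary input list of size $n$ to the sparse representation used in equation (\ref{bx}) without exceeding the $\tilde O(n+t)$ budget.
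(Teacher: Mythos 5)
Your proposal is correct and follows essentially the same route as the paper, which simply combines Lemma \ref{lemprob} and Lemma \ref{lemalgo}; you merely spell out the preprocessing, prime sampling, and one-sided-error bookkeeping that the paper leaves implicit.
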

\begin{proof}
By sampling and using Miller-Rabin primality test \cite[Section~31.8]{cormen2009introduction}, we can pick a uniform random prime $p$ from interval $[t+1,(n+t)^3]$   in $(\log (n+t))^{O(1)}$ time with $O((n+t)^{-1})$ failure probability.  Then the theorem immediately follows from Lemma \ref{lemprob} and Lemma \ref{lemalgo}.
\end{proof}

\bibliography{oasics-v2018-sample-article}

\end{document}